\documentclass[%
reprint,
eprint,
%superscriptaddress,
%groupedaddress,
%unsortedaddress,
%runinaddress,
%frontmatterverbose, 
%preprint,
%preprintnumbers,
%nofootinbib,
%nobibnotes,
%bibnotes,
%pra,
%prb,
%rmp,
%prstab,
%prstper,
%floatfix,
longbibliography
]{revtex4-2}

\usepackage{graphicx}% Include figure files
\usepackage{dcolumn}% Align table columns on decimal point
\usepackage{bm}% bold math
\usepackage{amsmath,amssymb,amsthm,easybmat,verbatim}
\usepackage{enumerate}
\usepackage{hyperref}
\hypersetup{colorlinks=true, linkcolor=blue!80!black, urlcolor=blue!80!black, citecolor=blue!80!black}
\usepackage{color}
\usepackage{longtable}
\usepackage{array}
\usepackage{tikz}
\usepackage{float}
\usepackage[normalem]{ulem}

\newtheorem{corollary}{Corollary}
% Define the conjecture environment
\newtheorem{conjecture}{Conjecture}

\newtheorem{theorem}{Theorem}

\usepackage{physics}
\usepackage{mathtools}

\newcommand{\defined}{\triangleq}

\begin{document}

%\title{Local Operations as a Tool for Controlling System--Environment Dynamics}
\title{An Interacting System and Environment with Prior Correlations Plus Local Operations Can Mimic Uncorrelated Evolution}

\author{Daniel Dilley$^{3}$}
\email{ddilley@anl.gov}
\author{Alvin Gonzales$^{3}$}
\email{agonza@siu.edu}
\author{Jeffrey Larson$^3$}
\email{jmlarson@anl.gov}
\author{Mark Byrd$^{1,2,4}$}
\email{byrdms@ornl.gov}
\affiliation{$^1$School of Computing, Southern Illinois University Carbondale}
\affiliation{$^2$School of Physics and Applied Physics, Southern Illinois University Carbondale.}
\date{\today}
\affiliation{$^3$Mathematics and Computer Science Division, Argonne National Laboratory}
\affiliation{$^4$ (Present Address) Computational Sciences and Engineering Division, Oak Ridge National Laboratory}
%Oak Ridge, TN 37830 USA}

\begin{abstract}
Initial system--environment correlations can induce reduced dynamics that depart from the standard completely positive (CP) description. We study when such dynamics can be reproduced by the same global unitary acting on an initial system--environment product state. We show that local preprocessing on the system, applied prior to the joint evolution, can lead to dynamics which are completely positive.  We focus on two-qubit system--environment states, with a three-qubit extension when a single ancilla is used to implement Kraus channel preprocessing. We analyze three classes of local operations---measurements, unitaries, and Kraus channels---and characterize their ability to satisfy dynamics matching while minimally perturbing the initial system state. Local measurements can always enforce dynamics matching but necessarily reduce state fidelity. We numerically investigated local unitary preprocessing. Local unitaries enforce dynamics matching in all tested instances and typically preserve the system state: in 402 numerical cases the average fidelity is $99.8\%$, more than $92\%$ of cases exceed $99\%$ fidelity, and the minimum fidelity is $94.2\%$. We show that a two-term Kraus channel, realizable with a single ancillary qubit coupled to the system, achieves dynamics matching with unit fidelity in all tested cases.

As a second contribution, we demonstrate that local preprocessing can prevent the emergence of non--completely positive (NCP) reduced dynamics in a time-dependent setting. For a correlated two-qubit experiment with a time-dependent global unitary $U(t)$, the induced compatibility domain is fixed by the initial correlations. For every $t$ in a nontrivial continuous interval, the reduced dynamics are NCP when defined over this domain. We then show that a single, time-independent local unitary applied prior to the joint evolution renders the reduced dynamics CP over the entire interval.
\end{abstract}

\maketitle

\section{Introduction} The dynamics of an open quantum system are shaped by interactions with the environment and any initial correlations present between them. A common assumption in open quantum systems theory is that the system and environment are initially uncorrelated, in which case the reduced system dynamics are described by a completely positive (CP) map \cite{Nielsen_Chuang_Textbook_2011, breuer2002theory,rivas2012open}.  
% This assumption guarantees a Kraus representation and simplifies both modeling and analysis.
In realistic settings, however, state preparation is imperfect, and residual system--environment correlations are often unavoidable \cite{modi_2012_Positivity,Brodutch_Datta_Modi_Rivas_Rosario_2013,liu_2014,rodriguez-rosario_2008,Dominy_Shabani_Lidar_2016,Dominy_Lidar_2016}.   Correlated initial states can then give rise to reduced dynamics that are not completely positive, complicating interpretation and limiting the applicability of product-state models \cite{Nielsen_Chuang_Textbook_2011, Pechukas_1994, Pechukas_1995}.

This work addresses two operational questions about mitigating the effects of initial correlations using interventions on the system alone. The first is a \textit{dynamics-matching} problem: Given reduced dynamics arising from a correlated initial state, what local operations to the system can we apply, before the joint evolution, such that the resulting reduced dynamics are exactly those that would be obtained from an initial product system--environment state under the same global unitary? The second concerns complete positivity: In time-dependent scenarios where reduced dynamics are naturally defined only on a compatibility domain fixed by the initial correlations \cite{Shaji_Sudarshan_2005}, what local preprocessing prevents the appearance of non-completely positive (NCP) reduced dynamics? We study these questions in a two-qubit setting, where the system dynamics are obtained by tracing out one qubit after a global unitary evolution.  %{\byrd{If the evolution is CP, does this mean that there aren't any correlations?}} \dilley{CP dynamics can occur on the entire compatibility domain for a correlated system and environment. So it is not necessary to eliminate the correlations between the two to ensure CP evolution. We just have to show that the dynamics on the system is only described by a CP channel.}

Within this framework we consider three families of local preprocessing operations---measurements, unitaries, and general Kraus channels---and quantify how well each enforces dynamics matching while minimally disturbing the system state. To compare strategies on equal footing, we optimize the fidelity between the system state before and after preprocessing, thereby identifying the least invasive local intervention that restores product-state-consistent dynamics. 

Our numerical results show a clear set of trade-offs. Local measurements can always enforce dynamics matching, but they do so by disturbing the system state and thus lowering fidelity. Local unitaries require no ancillary systems and enforce dynamics matching in all 402 tested instances; across this dataset they achieve an average fidelity of approximately $99.8\%$, exceed $99\%$ fidelity in more than $92\%$ of cases, and attain a minimum fidelity of $94.2\%$. When ancillary resources are available, we find that a two-term Kraus channel---implementable with a single ancilla qubit---enforces dynamics matching with unit fidelity in all 402 cases studied, providing a non-disturbing solution in this setting.

Next, we present a proof-of-principle calculation of NCP-to-CP mitigation via local preprocessing. For a correlated two-qubit experiment with a time-dependent joint unitary $U(t)$, we show that the resulting reduced dynamics are NCP for every $t$ in a continuous time interval when defined over the induced compatibility domain. We then show that a single, time-independent local unitary applied prior to the global evolution suffices to render the reduced dynamics completely positive over the entire interval. Together, these findings clarify the operational role of local preprocessing in open quantum systems and delineate the trade-offs between fidelity and resource requirements. All numerical data and code supporting our results are publicly available \cite{github_dilley}.

\section{Background}
We consider a quantum system $S$ interacting with an environment $E$ via a global unitary evolution $U$. We do not assume that the initial joint state $\rho^{SE}$ is uncorrelated. The reduced system state before and after the global evolution is
\begin{align}
\rho^S \defined \tr_E(\rho^{SE}), \qquad \rho_U^S \defined \tr_E\!\left(U\,\rho^{SE}\,U^\dagger\right).
\end{align}
We allow for any kind of local preprocessing channel prior to global evolution. Let
\begin{align}
\rho^{S}_{\Lambda} \defined \tr_E(\Lambda_S(\rho^{SE}))%, \qquad \rho_{U,\Lambda}^S \defined \tr_E\!\left(U\,\Lambda_S(\rho^{SE})\,U^\dagger\right),
\end{align}
where $\Lambda_S(\star)$ is the preprocessing channel on the system $S$. The channel reduces to identity when no preprocessing is applied. A central question in this work is when the observed input-output transformation can be reproduced as if it originated from an initial product system--environment state. We say that the system dynamics are \textit{$U$-generated by a product state} if there exists a valid environment state $\zeta^E$ such that \begin{align}\label{eq:initialCond0} \tr_E\left(U\,\Lambda_S(\rho^{SE})\,U^\dagger\right) = \tr_E\left(U\,(\Lambda_S(\rho^S) \otimes \zeta^E)\,U^\dagger\right). \end{align} We refer to Eq.~\eqref{eq:initialCond0} as the \textit{dynamics-matching condition}. It concerns a single system input state and does not assume that the reduced dynamics is a completely positive map.

Our goal is to enforce the dynamics-matching condition using local preprocessing operations acting only on the system before the joint evolution. Throughout, we compare different preprocessing strategies by optimizing the fidelity between the system state before and after the local preprocessing,
\begin{align}
    F(\rho^S, \rho^S_\Lambda),
\end{align}
thereby identifying the least invasive intervention that still enforces dynamics matching. 

We first consider local measurements as a route to enforcing Eq.~\eqref{eq:initialCond0}. Following Brun's generalized measurement framework~\cite{Brun_2002}, we model a two-outcome measurement using Hermitian operators
\begin{align}
\label{Eq:measurement}
M_{\pm}(\epsilon,\hat{n}) \defined \dfrac{1}{2} \big( \varepsilon_+ \mathbb{I} \pm \varepsilon_- \hat{n}\cdot\vec{\sigma} \big),
\end{align}
where
\begin{align}
\label{Eq:Measurement_Coefficients} \varepsilon_{\pm} \defined \sqrt{\tfrac{1+\epsilon}{2}} \pm \sqrt{\tfrac{1-\epsilon}{2}}, \qquad 0 \leq \epsilon \leq 1,
\end{align}
and $\hat{n}$ is a unit vector. These operators satisfy the completeness relation $M_+^2 + M_-^2 = \mathbb{I}$. The parameter $\epsilon$ controls the measurement strength, interpolating between weak ($\epsilon \approx 0$) and projective ($\epsilon = 1$) measurements. We focus on weak measurements because they disturb the initial state less strongly. Measurements provide a guaranteed mechanism for enforcing dynamics matching but generally do so at the cost of disturbing the system state.

For concreteness, we specialize throughout in a two-qubit system--environment model, with joint initial state
\begin{align}
\rho^{SE} = \dfrac{1}{4}\left( \mathbb{I}\otimes\mathbb{I} + \vec{a}\cdot\vec{\sigma}\otimes\mathbb{I} + \mathbb{I}\otimes\vec{b}\cdot\vec{\sigma} + \sum_{i,j=1}^3 t_{ij}\, \sigma_i \otimes \sigma_j \right),
\end{align}
where $\vec{a}$ and $\vec{b}$ are the Bloch vectors of the system and environment, respectively, and $t_{ij}$ is the correlation matrix. The joint dynamics are governed by a global unitary $U$.

\section{Results}
In this section we present three strategies for ensuring that the system dynamics can be $U$-generated from an initial product state. We begin with the case where the original pair $(U,\rho^{SE})$---that is, the preprocessing channel is identity---does not satisfy the dynamics-matching condition. However, the condition can be enforced by allowing a preprocessing local operation $G$ on the system. Accordingly, we redefine the experiment as the tuple $(U,\,(G \otimes \mathbb{I})\,\rho^{SE}\,(G \otimes \mathbb{I})^\dagger).$ This construction modifies system--environment correlations without requiring direct control of the environment. In general, the choice of $G$ depends on both the global unitary $U$ and the initial joint state $\rho^{SE}$. While this limits the generality of closed-form statements, it still allows us to identify practically useful strategies and to quantify the benefits of modest additional resources, such as a single ancilla qubit, for implementing more general local channels.

The first strategy takes $G$ to be a local measurement operation. This discussion is in Appendix \ref{sec:TransfomToCorr}-\ref{sec:Werner}. In Appendix~\ref{Sec:Repeated_Measurements} we show that a sequence of repeated measurements in Eq.~\eqref{Eq:measurement} is generally not equivalent to a single measurement with a different choice of $\epsilon$ when the direction is fixed. In particular, repeated preprocessing can induce system dynamics that cannot, in general, be reproduced by any single measurement of the form in Eq.~\eqref{Eq:measurement}. We use measurements as a motivating example because, although they introduce irreversibility and can disrupt the system state, they can always be tuned to perturb correlations enough to enforce product-state $U$-generation. In the extreme case, a projective measurement destroys all system--environment correlations, yielding a product state by construction. The trade-off is a potentially substantial loss of system-state fidelity. This feature may nevertheless be useful in settings such as stabilizer measurements, where the data qubits are not measured directly. In such cases, projective measurements on an ancillary subsystem may help suppress detrimental correlated errors.

We next examine local unitary preprocessing. Local unitaries preserve the system spectrum, require no ancillary resources, and can often satisfy the dynamics-matching condition while only weakly perturbing the system state. We numerically examine their performance over 402 random two-qubit initial global states. We were able to find a local unitary that enforces dynamics matching in all 402 tested instances. Across this dataset we achieve an average fidelity of approximately $99.8\%$, exceed $99\%$ fidelity in more than $92\%$ of cases, and attain a minimum fidelity of $94.2\%$.
\textbf{GAIL - Note that the preceding sentence repeats  that in the Intro. Perhaps the details should be deleted from there.}
% Beyond their numerical performance in fidelity-optimized dynamics matching, local unitaries also play a distinct role in controlling the structure of reduced dynamics under time-dependent global evolution. 
% Whenever Eq.~\eqref{eq:initialCond0} holds for some valid environment state $\zeta^E$, the experimentally observed transformation admits a completely positive representation at its single admissible input. 

Let $\zeta^E = \sum_j p_j\,\ket{\zeta_j}\!\bra{\zeta_j}$ and $\{\ket{\eta}^E\}$ be an orthonormal basis  for the environment in Eq.~\eqref{eq:initialCond0}. Then 
\begin{align}
\label{Eq:Kraus_Solution} \tr_E\left(U\,(\rho^S \otimes \zeta^E)\,U^\dagger\right) = \sum_{j,\eta} p_j\, K_{j\eta}\,\rho^S\,K_{j\eta}^\dagger,
\end{align}
with Kraus operators $K_{j\eta} \defined {}^E\!\bra{\eta}\,U\,\ket{\zeta_j}^E$. This does not imply that the reduced dynamics define a CPTP map on the compatibility domain \cite{Shaji_Sudarshan_2005}, but this observation motivates our study of local Kraus channels as preprocessing operations. In particular, we consider two-term Kraus channels implemented by  using a single ancillary qubit. These channels constitute a minimal non-unitary extension beyond local unitaries and provide, in all tested cases, a mechanism for enforcing the dynamics-matching condition while preserving the system state with unit fidelity.

We recall that when the system and environment are initially correlated, the reduced dynamics need not be completely positive~\cite{Pechukas_1994,Pechukas_1995,Alicki_1995,Jordan_Shaji_Sudarshan_2004}. Such NCP behavior reflects the restricted set of physically admissible system inputs---that is, the compatibility domain induced by the initial correlations---rather than a failure of quantum mechanics. In later sections we give a proof-of-principle calculation that local unitary preprocessing can eliminate NCP reduced dynamics for a correlated two-qubit experiment with a time-dependent global evolution. This result is conceptually distinct from the fidelity optimization used to compare dynamics-matching strategies.

% The second strategy applies a local unitary operation. This approach is deterministic, requires no ancillary resources, and can be more effective at reshaping correlations while preserving the system state. However, unitary pre-processing still cannot guarantee perfect fidelity in every instance once one optimizes over the dynamics-matching condition.

% The third strategy uses two-term Kraus channels, which can overcome the limitations of both measurements and local unitaries. We restrict to two terms because only one additional ancilla qubit is required for an experimental implementation. Below, we first illustrate how correlations are transformed under local measurements and local unitaries, and then motivate the need for Kraus-channel pre-processing.
\subsection{Application of Local Unitaries}
We consider the example given in Appendix
\ref{Sec:Full_Projection_Example} but replace local measurements with arbitrary local unitaries applied to the system prior to the global evolution. Once again we take $U = SWAP\cdot CNOT$, but now we rotate the initial joint state by a system-only unitary,
\begin{align} \label{Eq:Exp_with_theta}
\rho^{SE}(\theta) = (R_y(\theta) \otimes \mathbb{I}) \Phi^+ (R_y(\theta)^\dagger \otimes \mathbb{I}),
\end{align}
where $R_y(\theta) = e^{-i \theta/2 \hat{\sigma}_y}$. Because the reduced system state is maximally mixed, any local unitary preserves the system state exactly. This is in sharp contrast to local measurements in the same scenario, which necessarily disturb the system.

We apply the measurement $M_+$ in Eq.~\eqref{Eq:measurement} to the state in Eq.~\eqref{Eq:Exp_with_theta} prior to global evolution $U$. Figure \ref{Fig:Swap_Cnot_Plot} reports the minimum $\epsilon$ of our measurement required for the system dynamics to be $U$-generated by a product state as $\theta$ varies over $[0,\pi/2]$. This data illustrates the expected fidelity trade-off: as the required measurement strength increases, the optimal fidelity decreases. In particular, measurements cannot preserve fidelity perfectly unless no measurement is required. By contrast, local unitaries can steer the joint correlations while leaving the state of the system unchanged. For example, at $\theta = \pi/2$, no measurement is necessary, and the system dynamics can be $U$-generated from a product state. This shows that local basis changes can be sufficient to restore product-state $U$-generation while preserving the initial state of the system exactly.
\begin{figure}[h]
\centering
\includegraphics[scale = 0.40]{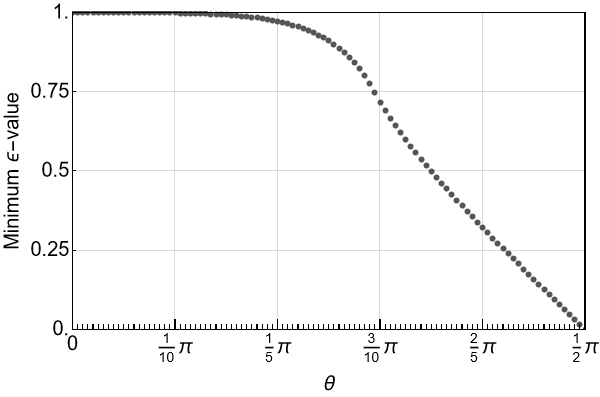}
\caption{Plot of the minimum $\epsilon$-value needed to $U$-generate the system dynamics, $\rho^S (\theta) \xrightarrow{U} \tr_E (U \rho^{SE}(\theta) U^\dagger)$, using a product state. The rotation angle $\theta$ for the local unitary is varied between 0 and $\pi/2$.} \label{Fig:Swap_Cnot_Plot}
\end{figure}
This example motivates a broader question: Can a local unitary on the system always be chosen so that the resulting reduced dynamics are $U$-generated by an initial product state?

As an illustration, for the Werner state $W(\lambda)$ in Appendix \ref{sec:Werner} and global evolution given by a CNOT gate, the local unitary $e^{-i \pi/4 \hat{\sigma}_y}$ suffices for any $\lambda$. 
% More generally, if one only has access to the system marginal of an unknown joint state $\rho^{SE}$, when is it possible to select a local unitary that enforces product-state $U$-generation? 
Operationally, this approach avoids the destructiveness of measurements by replacing correlation erasure with a controlled, orthogonal transformation of the correlation matrix. Correlations are not eliminated; rather, they are reshaped in a predictable way that can restore dynamics matching.

\subsection{Classes of Global Unitaries Admitting Product-State Dynamics Matching} We now analyze how the structure of the global unitary evolution influences the feasibility of enforcing the dynamics-matching condition via local operations on the system. Specifically, we classify two-qubit global unitaries according to their nonlocal parameters and determine when local preprocessing can always be used to $U$-generate the observed system dynamics from a product state. Any two-qubit unitary $U$ admits a Cartan $KAK$ decomposition~\cite{Helgason_1978, Knapp_2002, Tucci_2005}, separating $U$ into local rotations and a nonlocal part parameterized by three real numbers $\alpha_1, \alpha_2, \alpha_3$. The nonlocal component is given by
\begin{align}
\label{Eq:Nonlocal_Part}
\Omega \defined e^{-i (\alpha_1 \hat{\sigma}_1 \otimes \hat{\sigma}_1 + \alpha_2 \hat{\sigma}_2 \otimes \hat{\sigma}_2 + \alpha_3 \hat{\sigma}_3 \otimes \hat{\sigma}_3)}.
\end{align}
We define the set $\mathcal{S} = \{ n\pi/2 : n \in \mathbb{W} \}$ and classify unitaries according to whether their nonlocal parameters $\alpha_i$ belong to $\mathcal{S}$. In particular, we distinguish two-parameter families, where only one $\alpha_i$ is in $\mathcal{S}$, from more general three-parameter cases. To facilitate the analysis, we introduce the Givens rotation matrix $g_{ab}$, which acts on the correlation components of a two-qubit state and can always be implemented by local unitaries on the system. The nonzero elements of $g_{ab}$ are defined as follows:
\begin{itemize} \item $g_{aa} = 1$ for $a \neq i, j$,
\item $g_{aa} = \cos(\theta)$ for $a = i, j$,
\item $g_{ji} = -g_{ij} = -\sin(\theta)$ for fixed $i > j$,
\end{itemize}
where this rotation applies a counterclockwise rotation in the $(i, j)$ plane by angle $\theta$. We now state our first main result for this section.
\begin{theorem}[Local Unitary]
\label{Thm:Two_Parameter_Theorem}
Let $\Omega$ be the nonlocal part of the global evolution as defined above. If $U$ is part of a two-parameter family---that is, only one of the parameters $\alpha_1, \alpha_2, \alpha_3$ is in $\mathcal{S}$---then suitable Givens rotations on the system can always be used to ensure that the resulting system dynamics are $U$-generated by an initial product state. \end{theorem}
\noindent The proof of Theorem~\ref{Thm:Two_Parameter_Theorem} is provided in Appendix~\ref{Supp:Theorem_1}.

To determine the explicit local unitaries required to implement these Givens rotations, one may use the formula from \cite{Dilley_2022} to modify the correlation matrix. For example, if $0 < x < \pi/2$, the first local unitary is given by
\begin{align}
U_1 \defined \frac{1}{2}
\begin{pmatrix}
a - i b & -a - i b \\
a - i b & a + i b,
\end{pmatrix}
\end{align}
where $a \defined \sqrt{1-\sin(x)}$ and $b \defined \sqrt{1+\sin(x)}$. This unitary induces the first Givens rotation on the left side of the correlation matrix of $\rho^{SE}$. In Appendix~\ref{Sec:Numerical_Local_Operations} we numerically investigate the more general case of three-parameter unitaries, where $\alpha_i \notin \mathcal{S}$ for all $i \in \{1, 2, 3\}$.

Beyond the general case, it is instructive to consider the special scenario in which the initial system--environment state possesses a diagonal correlation matrix. This class of states is particularly relevant because, up to local unitaries, any two-qubit state can be transformed into a form with diagonal correlations. Consequently, the analysis of diagonal states encompasses a wide range of physically meaningful situations and provides insight into the universality of local control strategies.

Formally, let $\rho^{SE}_d$ denote a two-qubit state whose correlation matrix is diagonal, and let the global evolution be given by $\Omega$ as defined in Eq.~\eqref{Eq:Nonlocal_Part}. Although this restriction may appear specialized, it captures every instance equivalent up to local unitaries; that is, any tuple of the form $\{ (\mathcal{L}_1 \otimes \mathcal{L}_2) \Omega (\mathcal{R}_1 \otimes \mathcal{R}_2), (\mathcal{R}_1 \otimes \mathcal{R}_2)^\dagger \rho^{SE}_d (\mathcal{R}_1 \otimes \mathcal{R}_2) \}$ will exhibit the same properties for the purposes of dynamics matching. In effect, the right local unitaries of the global evolution $U$ are matched to the conjugate transpose of the local unitaries acting on the state $\rho^{SE}_d$.

This observation leads to the following theorem, which establishes that local unitary preprocessing is always sufficient to enforce the dynamics-matching condition for diagonally correlated states.
\begin{theorem} \label{Thm:Diagonal_States}
For all nonlocal unitaries $\Omega$ given in Eq.~\eqref{Eq:Nonlocal_Part} and two-qubit states $\rho^{SE}_d$ whose correlation matrix is diagonal, the system dynamics can always be $U$-generated by a product state.
\end{theorem}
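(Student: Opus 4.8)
The plan is to specialize the three dynamics–matching conditions, Eqs.~(\ref{Eq:Condition_1})–(\ref{Eq:Condition_3}) derived in the proof of Theorem~\ref{Thm:Two_Parameter_Theorem}, to a state $\rho_D^{SE}$ whose correlation matrix is diagonal. The crucial structural observation is that the correlation matrix enters those three equations \emph{only} through off-diagonal entries ($t_{32},t_{23}$ in the first, $t_{31},t_{13}$ in the second, $t_{21},t_{12}$ in the third). Hence for a diagonal correlation matrix all of these vanish identically, the diagonal entries $t_{11},t_{22},t_{33}$ never appear, and the conditions reduce to three equations involving only the system Bloch vector $\vec a$, the environmental Bloch vector $\vec b$, the unknown $\vec\zeta$, and the angles $\alpha_i$. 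No local unitary is needed here; I only have to exhibit one valid $\zeta^E$.

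First I would treat the generic three-parameter case, $\alpha_i\notin\mathcal S$ for all $i$, so that $\sin(2\alpha_i)\neq 0$. Dividing the $i$-th condition by the product of its two sine factors and writing $\kappa_i\defined\cot(2\alpha_i)$, the system collapses to the clean vector equation
\begin{align}
    \vec\zeta - \vec w\times\vec\zeta = \vec b, \qquad \vec w\defined(a_1\kappa_1,\,a_2\kappa_2,\,a_3\kappa_3),
\end{align}
that is, $(\mathbb{I}-[\vec w]_\times)\vec\zeta=\vec b$, where $[\vec w]_\times$ is the antisymmetric cross-product matrix. I expect the recognition of this cross-product structure to be the key step—once it is in hand the rest is forced. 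Since $[\vec w]_\times$ has eigenvalues $0,\pm i\lVert\vec w\rVert$, the matrix $\mathbb{I}-[\vec w]_\times$ has determinant $1+\lVert\vec w\rVert^2>0$ and is always invertible, giving the unique solution
\begin{align}
    \vec\zeta = \frac{1}{1+\lVert\vec w\rVert^2}\Big(\vec b + \vec w\times\vec b + (\vec w\cdot\vec b)\,\vec w\Big).
\end{align}

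The main thing left to verify—and the only place where something could in principle go wrong—is that this $\vec\zeta$ is a \emph{valid} Bloch vector, $\lVert\vec\zeta\rVert\le 1$. I would establish this by splitting $\vec b=\vec b_\parallel+\vec b_\perp$ into components parallel and perpendicular to $\vec w$. The parallel part is reproduced exactly, while the perpendicular part is contracted: using that $\vec b_\perp$ and $\vec w\times\vec b_\perp$ are orthogonal with $\lVert\vec w\times\vec b_\perp\rVert=\lVert\vec w\rVert\,\lVert\vec b_\perp\rVert$, one finds
\begin{align}
    \lVert\vec\zeta\rVert^2 = \lVert\vec b_\parallel\rVert^2 + \frac{\lVert\vec b_\perp\rVert^2}{1+\lVert\vec w\rVert^2} \le \lVert\vec b_\parallel\rVert^2 + \lVert\vec b_\perp\rVert^2 = \lVert\vec b\rVert^2 \le 1,
\end{align}
since $\vec b$ is the Bloch vector of a physical reduced state. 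Thus a valid $\zeta^E$ always exists, and remarkably its norm never exceeds that of $\vec b$ no matter how large $\vec w$ becomes as some $\alpha_i$ approaches $\mathcal S$.

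Finally I would dispatch the degenerate cases in which one or two of the $\alpha_i$ lie in $\mathcal S$ (so the corresponding $\sin(2\alpha_i)$ vanish and the division above is illegal). These are precisely the one- and two-parameter families, already covered by Theorem~\ref{Thm:Two_Parameter_Theorem} since a diagonal-correlation state is a special two-qubit state; alternatively, substituting $\sin(2\alpha_i)=0$ directly into Eqs.~(\ref{Eq:Condition_1})–(\ref{Eq:Condition_3}) collapses them to constraints trivially solved by setting the corresponding components of $\vec\zeta$ to $0$ or to the matching component of $\vec b$, again yielding a valid $\vec\zeta$. Combining the generic and degenerate cases covers all nonlocal $\Omega$, completing the argument.
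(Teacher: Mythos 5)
Your proposal is correct and follows essentially the same route as the paper's proof: both specialize Eqs.~(\ref{Eq:Condition_1})--(\ref{Eq:Condition_3}) to the diagonal case, reduce them to the linear system $(\mathbb{I}-[\vec w]_\times)\vec\zeta=\vec b$ with $\vec w=(a_1\cot 2\alpha_1,a_2\cot 2\alpha_2,a_3\cot 2\alpha_3)$ (the paper writes this as $(\mathbb{I}+S^{-1}M)\vec\zeta=\vec b$, which is the same antisymmetric matrix), obtain the same norm $\lVert\vec\zeta\rVert^2=\bigl(b^2+(\vec w\cdot\vec b)^2\bigr)/(1+w^2)\le b^2\le 1$, and dispatch the angles with $\sin(2\alpha_i)=0$ by the same trivial assignment of $\vec\zeta$. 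Your explicit cross-product inverse and parallel/perpendicular split is a slightly cleaner packaging of the paper's norm bound, but it is the same argument.
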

\noindent The proof of Theorem~\ref{Thm:Diagonal_States} is provided in Appendix~\ref{Supp:Theorem_2}. This result further highlights the operational power of local unitaries: whenever the initial correlations are diagonal---or can be diagonalized by local rotations---the observed system dynamics can be reproduced by the same global unitary acting on a product state, without the need for preprocessing.

In addition to the case of diagonally correlated states, another important scenario arises when the global unitary evolution is incapable of generating entanglement from any input state. This restriction, while seemingly strong, has significant implications for the structure of the reduced dynamics. Even though separable states can sometimes produce non-completely positive dynamics that cannot be reproduced by using a product state, the absence of entangling power in the global evolution ensures that, in the present two-qubit setting, the resulting system dynamics always admit a product-state description. %ensures that, for any initial state prepared in the laboratory, the system dynamics can always be modeled as if they originated from a product state.

We formalize this observation in the following theorem.
\begin{theorem}[Non-entangling Unitaries]
\label{Thm:Non-Entangling_Unitary}
If for any two-qubit state $\rho^{SE}$, the nonlocal part of the unitary in Eq.~(\ref{Eq:Nonlocal_Part}) satisfies the condition
\begin{align} \label{Eq:Non_Entangling_Condition} \frac{3 - c_1(c_2 + c_3) - c_2 c_3}{4} = 0,
\end{align}
where $c_i \defined \cos(4 \alpha_i)$, then the dynamics on the system can be $U$-generated by a product state.
\end{theorem}
\noindent The proof of Theorem~\ref{Thm:Non-Entangling_Unitary} can be found in Appendix~\ref{Supp:Theorem_3}.

A natural question is whether these results can be extended to systems beyond two qubits, or even to general multipartite unitaries. Prior work on the entangling power of broader classes of unitary evolutions can be found in \cite{Scott2004, Linowski2020}, but a systematic extension of our framework to these settings remains open for future research.

Another intriguing direction concerns the role of local unitaries in shaping the system--environment correlations. Specifically, consider the transformation of the joint state under
\begin{align}
U (V \otimes \mathbb{I}) \rho^{SE} (V^\dagger \otimes \mathbb{I}) U^\dagger,
\end{align}
where $V$ is a local unitary acting only on the system. If one could always choose a local unitary $V$ on the system such that the entanglement of $\rho^{SE}$ is unchanged by the subsequent action of $U$, then the resulting reduced dynamics would always admit a product-state description. Indeed, a two-qubit unitary with zero entangling and disentangling power is locally equivalent either to the identity or to the SWAP operation. In that case, after the preprocessing $V \otimes \mathbb{I}$, the joint evolution $U$ would leave the state within the same local-unitary equivalence class or map it to one related by a local unitary SWAP equivalent. Since both situations are already known to satisfy the dynamics-matching condition, the system dynamics could then always be $U$-generated by an initial product state. This perspective suggests a more general connection between local control of correlations and the ability to recover a product-state description of the reduced dynamics.

\subsection{Maintaining Fidelity by Using Kraus Channels} The numerical analysis in Appendix~\ref{Sec:Numerical_Local_Operations} revealed that local unitary preprocessing can enforce the dynamics-matching condition for a wide range of two-qubit experiments, often with very high fidelity. In a subset of challenging cases, however, perfect fidelity could not be achieved by using unitary operations alone. Specifically, out of $1{,}000$ constructed experiments $(U_i, \rho^{SE}_i)$ designed to make Eq.~(\ref{eq:initialCond0}) difficult to satisfy, $402$ nontrivial instances were identified in which the inferred product-state environment $\zeta^E$ would otherwise be nonphysical (i.e., $|\vec{\zeta}| > 1$). For these cases, a single local unitary on the system $S$ was always sufficient to enforce dynamics matching, but the fidelity between the pre- and postprocessed initial system states did not always reach unity (see Fig.~\ref{Fig:Fidelity_vs_State}).

To address this inherent limitation of deterministic unitary preprocessing, we investigated whether more general local quantum channels---specifically, Kraus channels---could enforce Eq.~(\ref{eq:initialCond0}) while preserving the system state perfectly. Our goal was to identify the minimal non-unitary extension, ideally implementable with only a single ancillary qubit, that would achieve unit fidelity without altering the global evolution $U$.

We first considered local Kraus channels of the form
\begin{align}
\mathcal{E}(\rho) = \left( 1 - \sum_{j = 1}^3 p_j \right) \rho + \sum_{j = 1}^3 p_j (V_j \otimes \mathbb{I}) \rho (V_j^\dagger \otimes \mathbb{I}),
\end{align}
where the $V_j$ are local unitaries that rotate about the Bloch vector of the system and the $p_j$ are probabilities. These channels are appealing because they can be implemented without ancillary qubits, relying only on probabilistic mixtures of local unitaries. However, even in the worst-case scenarios---where the optimal unitary yields fidelities between $94\%$ and $95\%$---these channels increase the fidelity only marginally, to approximately $96\%$. 

Motivated by these results, we next examined more general fidelity-preserving Kraus maps. We found that quantum channels of the form
\begin{align}
\label{Eq:Kraus_Channel}
\mathcal{E}(\rho) = \tfrac{1}{2} \sum_{j = 1}^2 (V_j \otimes \mathbb{I}) \rho (V_j^\dagger \otimes \mathbb{I}), \end{align}
where $V_1$ and $V_2$ are local operations and the channel satisfies the completeness relation, allow us to $U$-generate the system dynamics using a product state with unit fidelity for all $402$ challenging cases. Restricting to two Kraus operators ensures that the channel can be implemented by using a single ancillary qubit via the Stinespring dilation~\cite{Stinespring1955}.

The resulting system dynamics can be written as 
\begin{align}
\notag\tr_{AE} \left[ (\mathbb{I} \otimes U) (W \otimes \mathbb{I}) \left( \ket{0}\bra{0}^A \otimes \rho^{SE} \right)\right.\\
\left. (W^\dagger \otimes \mathbb{I}) (\mathbb{I} \otimes U^\dagger) \right],
\end{align}
where $W$ is a two-qubit unitary acting on the ancillary qubit $A$ and the system $S$, used to induce the Kraus channel in Eq.~(\ref{Eq:Kraus_Channel}) on $S$. The unitary $W$ can be constructed directly from the Kraus operators by using the Gram--Schmidt process~\cite{GolubVanLoan1996}. In most cases, the channel reduces to a simple rotation about the local Bloch vector of $S$.

\subsection{Preventing NCP Dynamics} \label{Sec:NCP_Map_Example} We now present an example demonstrating how a local unitary operation can be used to prevent non-completely positive  dynamics on the system. To do so, we first define the reduced dynamical map in the context of a time-dependent global evolution $U(t)$ and the associated compatibility domain of system states.

Rather than considering the static experiment $\{\mathrm{CNOT}, \Phi^+\}$, we focus on the time-dependent experiment $\{e^{-i t\, \mathrm{CNOT}}, \Lambda^{SE}_p\}$, where the initial correlated state is given by
\begin{align}
\Lambda_p^{SE} (\tau^S) \defined \tau^S \otimes \frac{\mathbb{I}}{2} + (1-p) \left( \Phi^+ - \frac{\mathbb{I}}{2} \otimes \frac{\mathbb{I}}{2} \right),
\end{align}
with $\tr_E [\Lambda_p^{SE}(\tau^S)] = \tau^S$. Here, $U=e^{-it \text{CNOT}}$. Note that $\Lambda_p^{SE}$ is the assignment map \cite{Pechukas_1994, Shaji_Sudarshan_2005}.

The compatibility domain consists of valid input states $\tau^S$ whose Bloch vectors satisfy $\vec{\tau} = p \vec{\mu}$, where $\vec{\mu}$ is real and $\|\vec{\mu}\| \leq 1$. For fixed $p$, both the system--environment correlations and the local environmental state are determined. When $p = 1$, the system dynamics can be $U$-generated by a product state; but for $p < 1$, the dynamical map must be analyzed for all possible inputs.

The dynamical matrix that describes the subsystem dynamics for this experiment is
\begin{align} \label{Eq:A_Matrix}
A_S =
\begin{pmatrix}
1 & 0 & 0 & 0 \\
a & e^{-it}\cos(t) & 0 & a \\
a^* & 0 & e^{it}\cos(t) & a^* \\
0 & 0 & 0 & 1
\end{pmatrix},
\end{align}
where $a \defined (1-e^{-2it})(1-p)/4$ \cite{sudarshan_1961}. This matrix is obtained by vectorizing the state $\tr_E [\Lambda_p^{SE}(\tau^S)]$ and applying the general single-qubit trace-preserving map as described in Eq.~59 of~\cite{Dilley_2021}. The parameters of $A_S$ are chosen so that the dynamics yield the output state $\tr_E [U \Lambda_p^{SE}(\tau^S) U^\dagger]$ for all admissible inputs $\tau^S$. Equation~\eqref{Eq:A_Matrix} is the only valid $A_S$-matrix that does this.

The associated realigned $B_S$-matrix~\cite{Devi_2011} has the negative eigenvalue
\begin{align}
\lambda = \frac{1}{2} \left( 1 - \cos(t) - f(p,t) \sin\left(\frac{t}{2}\right) \right),
\end{align}
for $0 < p < 1$ and all times $t \in (0, \pi/8]$, where
\begin{align}
f(p,t) = \sqrt{2 \left[ 2 + (-2 + p)p + (-2 + p)p \cos(t) \right]},
\end{align}
which indicates that the dynamics on system $S$ are non-completely positive immediately after the global evolution begins, even when the state is separable. For $t = 0$, the map is the identity; and for $p = 1$, the dynamics are completely positive for all $t$.

We note that, although the experiment at $p = 1$ is described by a CP map for all $t$, the system dynamics cannot always be $U$-generated by a product state, particularly at $t = \pi/2$. 
% Thus, we define the subsystem dynamics as NCP if the realigned $B_S$-matrix has negative eigenvalues, regardless of whether the dynamics can be $U$-generated by a product state. 
In this example, NCP behavior arises only in the nonstationary setting, where the global unitary is time dependent and the reduced map is defined on a restricted compatibility domain.

Remarkably, a single local unitary can be used to prevent all NCP dynamics in this setting. For example, applying the local rotation $R_Y(\pi/2) \defined e^{-i (\pi/4) \sigma_y}$ to the system prior to the global evolution transforms the subsystem dynamics as follows:
\begin{align}
& R_Y(\pi/2) \tr_E (\Lambda^{SE}_p) R_Y^T(\pi/2) \\
\rightarrow & \tr_E \left[ U \left( R_Y(\pi/2) \otimes \mathbb{I} \right) \Lambda^{SE}_p \left( R_Y^T(\pi/2) \otimes \mathbb{I} \right) U^\dagger \right],
\end{align}
resulting in the new dynamical matrix
\begin{align}
A_S =
\begin{pmatrix}
1 & 0 & 0 & 0 \\
0 & e^{-it}\cos(t) & 0 & 0 \\
0 & 0 & e^{it}\cos(t) & 0 \\
0 & 0 & 0 & 1
\end{pmatrix},
\end{align}
which always yields a realigned $B_S$-matrix with non-negative eigenvalues.

Therefore, a single local rotation on the system suffices to prevent all NCP dynamics throughout the experiment. This example illustrates that a simple rotation of the initial system state can profoundly affect the nature of the subsystem dynamics induced by the environment and global evolution.  

\section{Conclusion}

We addressed the operational question of when reduced system dynamics arising from an initially correlated system--environment state $\rho^{SE}$, under a fixed global unitary $U$, can be reproduced as 
\begin{align}
\operatorname{Tr}_E\!\left[ U \bigl( \rho_S \otimes \zeta_E \bigr) U^\dagger \right].
\end{align}
 Our focus was on constructive, experimentally relevant ways to enforce this dynamics matching condition while quantitatively minimizing disturbance to the system. Our contributions are threefold. First, we introduced a measurement-based prescription that can always modify system--environment correlations so that the dynamics-matching condition is satisfied. This method provides an analytic route to construct the required environmental state $\zeta^E$ and to compute minimal measurement strength in representative families (e.g., Werner and Bell-derived examples), but it is inherently invasive and typically lowers the fidelity between the pre- and postpreparation system states.

Second, we showed that local unitary preprocessing is a far less disruptive alternative and is often sufficient. For any two-qubit correlated state and any two-parameter family of global unitaries, we proved that an appropriate local unitary on the system always suffices to $U$-generate the observed dynamics from a product state. For more general three-parameter nonlocal two-qubit unitaries, our numerical searches consistently support the same conclusion; in most cases, the optimal operation is a rotation about the system’s Bloch vector, preserving the system state exactly. These results yield a theoretical pathway: Given $\{U, \rho^{SE}\}$, one (i) searches for a Bloch-vector rotation that enforces dynamics matching and, when needed, (ii) augments this with a two-term Kraus implementation using a single ancilla qubit to reach unit fidelity in cases where unitaries alone cannot. In all tested cases, including extreme instances, this strategy succeeded in achieving perfect-fidelity product-state $U$-generation.

Third, we provided a proof-of-principle calculation for preventing NCP reduced dynamics using a single local rotation in a time-dependent setting. For the family of correlated states we defined and for all times $t$ under the chosen global evolution, preparing the system in the rotated basis $\{\ket{\tilde{+}}, \ket{\tilde{-}}\}$ eliminates NCP behavior; operationally, this corresponds to a rotation about the $y$-axis and can remove NCP dynamics without further intervention. This highlights that local basis choice can qualitatively change the reduced dynamics induced by fixed correlations and interactions, and it motivates deeper study of how local unitaries steer open-system behavior under time-dependent evolution.

We also clarified practical limitations and information requirements. Our constructive procedures assume access to the joint correlations $\mathcal{T}_\rho$, the local Bloch vectors of each qubit, and the global evolution $U$, which may be difficult to estimate. In many experimental systems, however, the primary system-environment interaction is known.   The interaction Hamiltonian is also a central issue in dynamical decoupling (DD) \cite{Viola1999_DynamicalDecoupling,Lidar/Brun:qec}, and such approaches in DD may be relevant to this situation. Reducing these assumptions is a central direction for future work.  The work here is also important for identifying correlations between the system and its environment.  The ability for a system-environment evolution to be describable by an initial product state can indicate the degree of robustness of methods for identifying initial correlations \cite{Kimura:07,chitambar2015,Modi_2012,RingbauerEtAl:15,Hagen:21}.  Notably, in the NCP-mitigation example, we did not require knowledge of the specific system input, but we did require knowledge of the system--environment interaction, which may be available in architectures where the noise model (and hence the interaction Hamiltonian) is characterized or empirically known.  

Several concrete  steps follow. One is to incorporate more realistic partial-information models, for example by bounding correlation strength via a $\bigl\| T_\rho - \vec{a}\,\vec{b}^{\,T} \bigr\| \le \kappa$, and to determine whether sufficiently small $\kappa$ guarantees product-state $U$-generation using a broader set of local unitaries. Similarly, because mitigating NCP dynamics appears to require less information than enforcing dynamics matching, an important open question is what minimal knowledge is needed to suppress or circumvent non-Markovian or otherwise detrimental behavior and what additional pathologies can be avoided via simple basis preparation or local rotations.  

Future directions include (i) establish a formal classification of two-qubit three-parameter nonlocal unitaries to convert our numerical evidence into a definitive theorem, (ii) determine whether a single ancilla qubit and two-term Kraus operators always suffice for perfect-fidelity dynamics matching in arbitrary two-qubit experiments, (iii) extend the framework to higher-dimensional systems and multipartite environments, and (iv) analyze robustness to realistic imperfections (gate errors, finite sampling, and temporal drift), potentially by embedding the search for local controls in an error-aware optimization framework.

Overall, these results support a control-centric view of open-system dynamics: modest local interventions---measurements, basis changes, and minimal ancilla-assisted channels---can often recast correlated dynamics as if they arose from product-state inputs. Establishing the scope and robustness of these methods in scalable quantum platforms is an important next step.
% , but the operational pathway is clear: carefully chosen local controls can substantially simplify and stabilize the description of correlated system--environment evolution.

\section*{Data Availability Statement}

The data and software that support the findings of this study are publicly available through the Dynamics Matching GitHub repository \cite{github_dilley} maintained by D. Dilley, A. Gonzales, J. Larson, and M. Byrd. All code and generated data necessary to reproduce the results presented in this work are available in the repository.

\section*{Acknowledgments}
This material is based upon work supported by the U.S.~Department of Energy, Office of Science, Accelerated Research in Quantum Computing, Fundamental Algorithmic Research toward Quantum Utility (FAR-Qu). This material is based upon work supported by the U.S. Department of Energy, Office Science, Advanced Scientific Computing Research (ASCR) program under contract number DE-AC02-06CH11357 as part of the InterQnet quantum networking project.

% This research was supported in part by the MCS FAR-QC and InterQnet projects at Argonne National Laboratory, administered by the United States Department of Energy. 

% \bibliographystyle{apsrev4-2}
\bibliography{modeling}% Produces the bibliography via BibTeX.

% The Following can be removed after acceptance
\vfill
\framebox{\parbox{.90\linewidth}{\scriptsize The submitted manuscript has been
created by UChicago Argonne, LLC, Operator of Argonne National Laboratory
(``Argonne''). Argonne, a U.S.\ Department of Energy Office of Science
laboratory, is operated under Contract No.\ DE-AC02-06CH11357.  The U.S.\
Government retains for itself, and others acting on its behalf, a paid-up
nonexclusive, irrevocable worldwide license in said article to reproduce,
prepare derivative works, distribute copies to the public, and perform publicly
and display publicly, by or on behalf of the Government.  The Department of
Energy will provide public access to these results of federally sponsored
research in accordance with the DOE Public Access Plan
\url{http://energy.gov/downloads/doe-public-access-plan}.}}

\appendix
% \section{Supplementary Material}

\section{Transformations to Correlations}\label{sec:TransfomToCorr}

The way correlations evolve between the system and environment depends on whether we apply a measurement or a unitary to the local system. For simplicity, we consider only the application of the measurement \(M_+\), which changes the local Bloch vectors and two-qubit correlation matrix as follows \cite{Gonzales_2024_Some}:
\begin{align} \label{Eq:Transformations}
    \vec{b}' &= \frac{\vec{b} + \epsilon \; \mathcal{T}_\rho^T \hat{n}}{1 + \epsilon \; \hat{n} \cdot \vec{a}}, \\ \nonumber
    \vec{a}' &= \frac{\sqrt{1-\epsilon^2} \; \vec{a} + (1-\sqrt{1-\epsilon^2}) (\hat{n} \hat{n}^T) \cdot \vec{a} + \epsilon \; \hat{n}}{1 + \epsilon \; \hat{n} \cdot \vec{a}}, \\ \nonumber
    \mathcal{T}_\rho' &=  \frac{\sqrt{1-\epsilon^2} \mathcal{T}_\rho + (1-\sqrt{1-\epsilon^2}) (\hat{n} \hat{n}^T) \cdot \mathcal{T}_\rho + \epsilon \; \hat{n} \vec{b}^T}{1 + \epsilon \; \hat{n} \cdot \vec{a}}.
\end{align}
Here, \(\mathcal{T}_\rho\) denotes the initial correlation matrix, and the probability of obtaining this result is given by $(1/2)(1 + \epsilon \; \hat{n} \cdot \vec{a})$. %Weaker measurements preserve some aspects of the initial system's state, assuming that the desired output is observed.
The probability of the ``$-$" outcome is given by flipping the sign on $\epsilon$, so we can easily determine the trade-off between the parameters and the likely outcome. In practice, we discard any event that is not associated with the chosen measurement output.

As a useful check, we see that the projective measurement ($\epsilon = 1$) completely destroys the correlations with the environment, leaving the overall state in product form; that is, $\mathcal{T}_\rho' = \vec{a}' \cdot (\vec{b}')^T$. However, this may come at the cost of fidelity, since we assume initial correlations between the system and environment, which restrict the local state to being mixed. For example, if the initial Bloch vector of the system is $\{0, 0, 0\}$, the resulting fidelity is only one-half in the worst-case scenario. This guarantees a product form in the end, but with minimal preservation of the state of the system, as we show in Section \ref{Sec:Fidelity}. In practice, we ideally want the state to be pure and completely separate from the environment. However, a projective measurement will also eliminate correlations with data qubits that are entangled with the system.

%For measurements, it is also uncertain which output we will obtain, which means there is a probability associated with the $+$ and $-$ outcomes. These probabilities are given by the normalization 

When we instead apply a local unitary $\mathcal{L}$, the effect on the correlations is much simpler and does not destroy correlations with other data qubits. The effect is merely a local change of basis. By taking $\mathcal{G}$ as a local unitary, we get the transformations
\begin{align} \label{Eq.Correlation_Transformations}
    \vec{b}' = \vec{b}, \quad \vec{a}' = \mathcal{O}_\mathcal{L} \cdot \vec{a}, \text{  and  } \mathcal{T}_\rho' = \mathcal{O}_\mathcal{L} \cdot \mathcal{T}_\rho.
\end{align}
Here, the operation \(\mathcal{O}_\mathcal{L}\) is the orthogonal matrix induced by \(\mathcal{L}\), with elements given by
\begin{align}
    \mathcal{O}_{ij} = \frac{1}{2} \operatorname{Tr}(\hat{\sigma}_i \mathcal{L} \hat{\sigma}_j \mathcal{L}^\dagger).
\end{align}
In this framework, there are two unitaries that map to the same orthogonal rotation on the original Bloch vector due to SU(2) being able to double-cover SO(3). The restriction of only being able to access the system prevents us from being able to diagonalize the correlation matrix, which would allow us to always $U$-generate the system dynamics with an initial product state. Next, we examine how these transformations affect fidelity.

\section{Fidelities of Different Strategies} \label{Sec:Fidelity}
Suppose we begin with an experiment described by the pair $(U, \rho^{SE})$, where the system dynamics cannot be $U$-generated by a product state input. By applying a measurement before the global unitary, we can try to enforce the product form. Note that if we find a solution for the measurement $M_-$, we know that there also exists a solution for $M_+$ and vice versa. So, the task is to find a measurement, $M_+$, and a valid environmental Bloch vector \(\vec{\zeta}^E\) such that the relation
\begin{align}\label{Eq:equality}
    &\tr_E \big[ U (M_+ \otimes \mathbb{I}) \rho^{SE} (M_+ \otimes \mathbb{I}) U^\dagger \big] \\ \nonumber
    = &\tr_E \big[ (U M_+ \rho^S M_+) \otimes \zeta^E U^\dagger \big]
\end{align}
is satisfied. To ensure that we preserve the initial state as much as we can, we optimize the fidelity between $\rho^S$ and $\rho^S_M = M_+ \rho^S M_+^\dagger$. This value is given by
\begin{align}
    F(\rho, \rho_\pm) = 1 - \frac{[1-(\vec{a}\cdot\hat{n})^2][1-\sqrt{1-\epsilon^2}]}{2(1\pm\epsilon \vec{a}\cdot\hat{n})},
\end{align}
where $\rho_{\pm} \defined M_{\pm}\rho^S M_{\pm}/\tr(M_{\pm} \rho^S M_{\pm})$. We obtained this exact solution by using the explicit single-qubit fidelity equation given by 
\begin{align}
    F(\rho, \sigma) = \tr (\rho \cdot \sigma) + 2 \sqrt{\text{det}(\rho) \cdot \text{det}(\sigma)}
\end{align}
and simplifying. Notice that the fidelity is affected by the magnitude of the local Bloch vector $\vec{a}$ and the direction $\hat{n}$. If its initial magnitude is 1, then $\hat{n} = \hat{a}$ for any $\epsilon$-value will result in a maximum fidelity of 1. Thus, $\epsilon$ should not be interpreted as a uniform measure of disturbance for all Bloch vectors $\hat{a}$; that is, it does not always represent how ``strong" the measurement is. On the other hand, if the local Bloch vector $\hat{a} = \{0,0,0\}$, then the fidelity monotonically decreases with an increase of $\epsilon$, which we use to optimize in some of our examples.
%On the downside, the probability of the ``$+$"-outcome becomes the worst case of $1/2$ in these scenarios, so more outcomes will need to be thrown away.

Now we consider the case when a local unitary transformation, $\mathcal{L}$, is used to $U$-generate system dynamics with an initial product state. The fidelity between the state of the system $\rho^S$ and its rotated version $\mathcal{L} \rho^S \mathcal{L}$ is given by
\begin{align}
    F_{\mathcal{L}} = 1 - \frac{1}{2} \left( \vec{a} - [\mathcal{O}_\mathcal{L} \cdot  \vec{a}] \right) \cdot \vec{a}.
\end{align}
This equation gives us insight into how fidelity changes, as it achieves maximum fidelity (\(F = 1\)) when $\vec{a}$ is left invariant by $\mathcal{O}_L$ and minimum fidelity (\(F = 1 - \vec{a} \cdot \vec{a}\)) when it rotates from $\vec{a}$ to $-\vec{a}$. Thus, we would like to rotate about the Bloch vector $\vec{a}$ whenever possible to perfectly preserve the fidelity of the initial system while ensuring CP dynamics. This is possible in many experiments, as we show in Appendix  \ref{Sec:Numerical_Local_Operations}, since the local unitary on the system also induces the same orthogonal rotation to the left side of the correlation matrix $\mathcal{T}_\rho$.

\section{Repeated Measurements}\label{Sec:Repeated_Measurements}

A natural question is whether a sequence of repeated measurements during preprocessing is equivalent to a single measurement with a different value of the parameter $\epsilon$. If this were true, then there would be no advantage to using repeated measurements in manipulating system dynamics. However, we show that this is generally not the case when we restrict ourselves to measuring in the same direction with the same value of $\epsilon$ each round. As discussed previously, the measurement in Eq.~\eqref{Eq:measurement} is related to the weak measurement ${A_{0|1}}$ in \cite{Brun_2002} by a unitary rotation. Our rotated measurements are weak in the sense that we may perturb the initial state of the system minimally while ensuring the dynamics-matching condition is satisfied. This may mean that near-projective measurements are necessary if the initial state is close to pure and has very little correlation with the environment. In this sense, the measurement is weak to us but strong in other contexts.

To show that these repeated measurements are inequivalent to a single round of measurements with a different value of $\epsilon$, we examine the effect of multiple rounds of measurements on an arbitrary initial state $\ket{\psi}$. First, we observe that $M_+ M_- \propto \mathbb{I}$, so that \begin{align} M_+^m M_-^n = \begin{cases} M_+^{m-n} \text{ if } m > n, \ M_-^{n-m} \text{ if } n > m, \end{cases} \end{align} up to a factor that does not affect the state of the system. The reason is that each measurement operator acts as the inverse of the other after renormalization.

Next, we consider both measurement operators raised to the $k$th power. We find that
\begin{align} M_\pm^k = \dfrac{1}{2} \left\{ \tilde{\epsilon}_+^{k} \mathbb{I} \pm \tilde{\epsilon}_-^{k} \hat{n} \cdot \vec{\sigma} \right\},
\end{align}
where the new coefficients are given by
\begin{align} \label{Eq:New_Coefficients}
\tilde{\varepsilon}_\pm^{k} = \left( \dfrac{1 + \epsilon'}{2} \right)^{k/2} \pm \left( \dfrac{1 - \epsilon'}{2} \right)^{k/2}.
\end{align}
These coefficients resemble the original ones in Eq.~\eqref{Eq:Measurement_Coefficients}, and it will become clear that whenever $\tilde{\varepsilon}_+^k = \varepsilon_+$ for some $k$, we cannot have $\tilde{\varepsilon}_-^k = \varepsilon_-$. The reason we include the primes in Eq.~\eqref{Eq:New_Coefficients} is that the repeated measurements will satisfy $\epsilon' \leq \epsilon$, with equality  occurring only  when $k = 1$. The exact solution for when $\tilde{\epsilon}_{+}^k = \epsilon_+$ is given by $\epsilon = \tilde{\epsilon}_{+}^k \sqrt{2 - (\tilde{\epsilon}_{+}^k)^2}$. On the other hand, the exact solution for when $\tilde{\epsilon}_{-}^k = \epsilon_-$ is given by $\epsilon = \tilde{\epsilon}_{-}^k \sqrt{2 - (\tilde{\epsilon}_{-}^k)^2}$. Define the function $g(q) = q \sqrt{2-q^2}$. Note that $g(\sqrt{2-q^2}) = g(q)$. Hence, $g(\tilde{\epsilon}_{+}^k) = g(\tilde{\epsilon}_{-}^k)$ \textit{iff} $\sqrt{2-(\tilde{\epsilon}_{+}^k)^2} = \tilde{\epsilon}_{-}^k$. So, if $\tilde{\epsilon}_{+}^k \sqrt{2 - (\tilde{\epsilon}_{+}^k)^2} = \tilde{\epsilon}_{-}^k \sqrt{2 - (\tilde{\epsilon}_{-}^k)^2}$, we must satisfy $(\tilde{\epsilon}_{+}^k)^2 + (\tilde{\epsilon}_{-}^k)^2 = 2$. 

From the previous equality, we calculate an equivalent expression given by
\begin{align} \label{Eq.Equivalent_Equality}
   h(\epsilon') \defined (1+\epsilon')^k + (1-\epsilon')^k = 2^k.
\end{align}
The derivative is given by
\begin{align}
    \dfrac{\partial h(\epsilon')}{\partial \epsilon'} = k[(1+\epsilon')^{k-1} - (1-\epsilon')^{k-1}],
\end{align}
which is strictly positive for $k \geq 2$ and $0 < \epsilon' <1$. Therefore, repeated measurements always cause the function $h(\epsilon')$ to increase for all $\epsilon'$-values. We also calculate $h$ at the end points to obtain $h(0) = 2$ and $h(1) = 2^k$, which means that $2 = h(0) < h(\epsilon') < h(1) = 2^k$ for $0 < \epsilon' < 1$. For the remaining case $k=1$, we have $h(\epsilon') = 2$ for all $\epsilon'$, and Eq.~\eqref{Eq.Equivalent_Equality} is satisfied trivially. %We conclude our proof with the last case of $k = 1$, and show that $h(\epsilon') = 2$ for all $\epsilon'$. This satisfies Eq. \ref{Eq.Equivalent_Equality} as expected.
Therefore, repeated measurements cannot be simulated by a single measurement unless they represent a series of the same projective measurement. For the purposes of this paper, we restrict our analysis to single measurements.

\section{Optimal Measurement Can Be Projective}
\label{Sec:Full_Projection_Example}
 When optimizing the fidelity in order to reproduce the system dynamics from a product state, one may ask whether it is necessary to perform a projective measurement that completely destroys all correlations between the system and environment. This would represent one of the worst-case scenarios if the system is entangled with other data qubits, and we show that it can happen with the next example. Consider when the initial correlations with the environment are given by the Bell state $\rho^{SE} = \op{\Phi^+}{\Phi^+}$. Let the global evolution be $U = SWAP \circ CNOT$. Then the system evolution is given by
 \begin{align}
     \rho^S = \dfrac{1}{2} \mathbb{I} \quad \overset{U}{\rightarrow} \quad \rho^S_U = \op{0}{0},
 \end{align}
which cannot be $U$-generated by a product state.

If a local measurement is applied before $U$, then the solution for the environmental Bloch vector is given by
\begin{align}
    &\zeta_x = \epsilon n_x, \; \zeta_y = \dfrac{n_y(\sqrt{1-\epsilon^2}-1)}{\epsilon}, \\
    &\text{ and } \zeta_z = \dfrac{n_z^2+(1-n_z^2)\sqrt{1-\epsilon^2}}{\epsilon n_z}.
\end{align}
This ensures that we can generate the same system dynamics with an initial product state. Notice that the solution is  valid only when $\epsilon = 1$ since we must satisfy the normalization condition for $\hat{n} \defined \{n_x,n_y,n_z\}$. The solution for the environmental Bloch vector then becomes $\vec{\zeta} = \{n_x, -n_y, n_z\}$. %However, since we have freedom in choosing $\hat{n}$ and $\epsilon = 1$, we can make the probability of obtaining the ``$+$"-outcome equal to one as expected.
This example illustrates a fundamental limitation of weak measurements: in certain settings, a partial disturbance of the system--environment correlations is insufficient. A full projective measurement is required to make the product-state description possible.

\section{Werner States with Local Measurements}
\label{sec:Werner}
To broaden the discussion, we now examine a class of mixed states that is $U \otimes U$-invariant, called Werner states. The Werner state $W(\lambda)$ is defined as a convex combination of the antisymmetric Bell state $\op{\Psi^-}{\Psi^-}$ and the maximally mixed state on two qubits. Explicitly,
\begin{align} \label{Eq:Werner_Family}
    W(\lambda) \defined \lambda \op{\Psi^-}{\Psi^-} + (1-\lambda) \dfrac{\mathbb{I} \otimes \mathbb{I}}{4},
\end{align}
where the parameter $\lambda$ lies between 0 and 1. Importantly, Werner states have vanishing local Bloch vectors, since the local states are maximally mixed. If we assume that the global evolution is given by the CNOT gate, then the local measurement must have $n_y = 0$ for the second component of the local Bloch vectors to be equal; otherwise, we can never satisfy the dynamics-matching condition. If the environmental part of the product state has a Bloch vector given by $\vec{\zeta} \defined \{\zeta_x, \zeta_y, \zeta_z\}$, then the only condition for equality is
\begin{align}
\label{Eq:rSolution}
    \zeta_x = -\left[ \sqrt{1-\epsilon^2} \cdot \dfrac{1}{n_x} + (1-\sqrt{1-\epsilon^2}) \cdot n_x \right] \cdot \dfrac{\lambda}{\epsilon},
\end{align}
where we need the optimal $\hat{n}_x$-value that allows us to minimize $\epsilon$ for a valid Bloch vector $\vec{\zeta}$ constrained to $\zeta_x \in [-1, 1].$ This will ensure that we are also maximizing the fidelity.

Since the parameter $\epsilon \in [0,1]$, the term $-\sqrt{1-\epsilon^2}$ is less than or equal to $-(1-\sqrt{1-\epsilon^2})$ for $\epsilon \in [0,\sqrt{3}/2]$, which divides our optimization into two parts. The minimum $\epsilon$ must occur when $n_x = 1$ to ensure there is a valid environmental Bloch vector on the first interval. The reason is that the $1/n_x$ term needs to be at its smallest for $\epsilon \in [0,\sqrt{3}/2]$ to prevent $\zeta_x$ from becoming less than $-1$ as $\epsilon$ becomes smaller. Since $n_x \in [-1,1]$ and the absolute value of the left-hand term is always larger than or equal to the absolute value of the right-hand term for all valid $n_x$-values, then $\zeta_x$ must be determined at the end points for $n_x$; let us say 1 since -1 gives us the same bound by symmetry. This enforces $\zeta_x = -\lambda/\epsilon$, which tells us that $\epsilon_{min} = \lambda$ since $\epsilon \geq 0$. In other words, when $\lambda$ is between the values $0$ and $\sqrt{3}/2$, the minimum $\epsilon$ needed to $U$-generate the system dynamics with a product state is simply equal to $\lambda$. Thus, $\epsilon_{min} = \lambda$ for $\lambda \in [0,\sqrt{3}/2]$.
\newline
\newline
\indent We now consider the interval in which $\epsilon \in [\sqrt{3}/2,1]$ so that the term $-(1-\sqrt{1-\epsilon^2})$ is less than or equal to $-\sqrt{1-\epsilon^2}$. In this case, we would like $n_x$ to be in the interval $[-1,0]$ since we cannot simply minimize the right term in Eq.~(\ref{Eq:rSolution}) because the terms on the left blow up to infinity. When our solution for $\zeta_x$ is unit ($\zeta_x \rightarrow \pm \infty$ in the limit as $\epsilon\rightarrow 0$, then by continuity $\zeta_x$ must be one or minus one when $\epsilon$ is minimum), we get the equality
\begin{align}
    \epsilon = -\dfrac{n_x^3 - (1-n_x^2)\sqrt{\lambda^2 + n_x^2(1-2\lambda^2)}}{n_x^2+(1-n_x^2)^2\lambda^2} \cdot \lambda.
\end{align}
Taking the derivative of $\epsilon$ with respect to $n_x$ and setting it equal to zero, and applying the boundary conditions on $n_x$ and $\epsilon$, we get the solutions
\begin{align}
    n_x = -\dfrac{1}{\sqrt{4 \lambda^2-2}} \; \text{and} \; n_z = \pm \sqrt{1-\dfrac{1}{\sqrt{4 \lambda^2-2}}}.
\end{align}

Plugging $n_x$ into our definition of $\epsilon$ yields the final solution of
\begin{align}
    \epsilon_{min} = \begin{cases}
    \lambda \; \text{ if } \lambda \in [0, \sqrt{3}/2] \\
    \dfrac{2\lambda \sqrt{4\lambda^2-2}}{4\lambda^2-1} \; \text{ if } \lambda \in [\sqrt{3}/2, 1]
    \end{cases}
\end{align}
for the minimum $\epsilon$ (maximum fidelity) that ensures the system dynamics is $U$-generated by a product state. As expected, the minimum value of $\epsilon$ depends on the parameter $\lambda$ that defines the Werner state. In Fig.~ \ref{Fig:Werner_Fidelity_Plot} we plot the optimal fidelities for the entire family of Werner states given in Eq.~\eqref{Eq:Werner_Family}. The plot monotonically decreases with an increase in $\lambda$, which means that it is harder to preserve fidelity with Werner states nearest to the singlet state. For the maximally mixed state, $\lambda = 0$, we can do this without any measurement, since it is already in the product form.
\begin{figure}[h!]
    \centering
    \includegraphics[scale = 0.67]{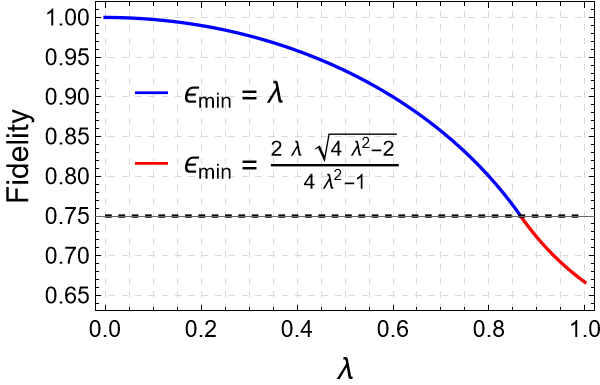}
    \caption{Optimal fidelity for $U$-generating the system dynamics by a product state over all family of Werner states $W(\lambda)$.}
    \label{Fig:Werner_Fidelity_Plot}
\end{figure}

\section{Proof of Theorem \ref{Thm:Two_Parameter_Theorem}}
\label{Supp:Theorem_1}

We want the following equality to hold for some local unitary $V$, the system, and  some valid environmental state $\zeta^E$:
    \begin{align} \label{Eq:Equality_Proof}
        \tr_E(U V^S \rho^{SE} V^{S^\dagger} U^\dagger) = \tr_E(U V^S \rho^S V^{S^\dagger} \otimes \zeta^E U^\dagger).
    \end{align}
    We can define the global evolution as $U = (\mathcal{L}_1 \otimes \mathcal{L}_2) \Omega (\mathcal{R}_1 \otimes \mathcal{R}_2)$ \cite{Khaneja_2001, Kraus_2001, Zhou_2007} and eliminate the local unitaries on the left, $\mathcal{L}_1$ and $\mathcal{L}_2$, since they make no difference to the equality in Eq.~\eqref{Eq:Equality_Proof}. We also observe that $R_1 V^S = R_1 V^S R_1^\dagger R_1 \rightarrow V^S R_1^\dagger$, since $V^S$ is chosen over all local unitaries. The same is true for the local environmental state $\zeta^E$, where $R_2 \zeta^E R_2^\dagger \rightarrow \zeta^E$ can be any state. This means that if a solution exists for the equality
    \begin{align} \label{Eq:Condition_Reduced}
        &tr_E[\Omega V^S (\mathcal{R}_1 \otimes \mathcal{R}_2) \rho^{SE} (\mathcal{R}_1^\dagger \otimes \mathcal{R}_2^\dagger) V^{S^\dagger} \Omega^\dagger] \\
        = &\tr_E[\Omega (V^S \mathcal{R}_1 \rho^S \mathcal{R}_1^\dagger V^{S^\dagger}) \otimes (\mathcal{R}_2 \zeta^E \mathcal{R}_2^\dagger) \Omega^\dagger],
    \end{align}
    then it also exists for Eq.~\eqref{Eq:Equality_Proof}.
    
    Since we want to prove Theorem \ref{Thm:Two_Parameter_Theorem} for all two-qubit states, equality must hold over any local unitaries $\mathcal{R}_1$ and $\mathcal{R}_2$. This will result in a generic state that is captured by any $\rho^{SE}$, so we can throw away these right local unitaries and only ensure that
    \begin{align} \label{Eq:Reduced_Equality_Equivalence}
        \tr_E(\Omega V^S \rho^{SE} V^{S^\dagger} \Omega^\dagger) = \tr_E(\Omega V^S \rho^S V^{S^\dagger} \otimes \zeta^E \Omega^\dagger)
    \end{align}
can hold for any two-parameter family of $\Omega$ and any $\rho^{SE}$, for some local unitary $V$ acting on the system, and some environmental state $\zeta^E$. When we get a solution for Eq.~\eqref{Eq:Reduced_Equality_Equivalence}, we can also obtain a solution for Eq.~\eqref{Eq:Equality_Proof} by conjugating $V^S$ by $R_1^\dagger$ and $\zeta^E$ by $R_2^\dagger$. Using the reduced form in Eq.~\eqref{Eq:Reduced_Equality_Equivalence}, we must show that the three equalities
    \begin{comment}
\begin{widetext}
    \begin{align} \label{Eq:Condition_1}
(t_{32} - a_3 \zeta_2) \cos(2\gamma) \sin(2\beta) + [(-t_{23} + a_2 \zeta_3) \cos(2\beta) + (b_1 - \zeta_1) \sin(2\beta)] \sin(2\gamma) &= 0, \\ \label{Eq:Condition_2}
(-t_{31} + a_3 \zeta_1) \cos(2\gamma) \sin(2\alpha) + [(t_{13} - a_1 \zeta_3) \cos(2\alpha) + (b_2 - \zeta_2) \sin(2\alpha)] \sin(2\gamma) &= 0, \\ \label{Eq:Condition_3}
\text{and } (t_{21} - a_2 \zeta_1) \cos(2\beta) \sin(2\alpha) + [(-t_{12} + a_1 \zeta_2) \cos(2\alpha) + (b_3 - \zeta_3) \sin(2\alpha)] \sin(2\beta) &= 0
    \end{align}
\end{widetext}\end{comment}
\begin{widetext}
    \begin{align} \label{Eq:Condition_1}
(t_{32} - a_3 \zeta_2) \cos(2\alpha_3) \sin(2\alpha_2) + [(-t_{23} + a_2 \zeta_3) \cos(2\alpha_2) + (b_1 - \zeta_1) \sin(2\alpha_2)] \sin(2\alpha_3) &= 0, \\ \label{Eq:Condition_2}
(-t_{31} + a_3 \zeta_1) \cos(2\alpha_3) \sin(2\alpha_1) + [(t_{13} - a_1 \zeta_3) \cos(2\alpha_1) + (b_2 - \zeta_2) \sin(2\alpha_1)] \sin(2\alpha_3) &= 0, \\ \label{Eq:Condition_3}
\text{and } (t_{21} - a_2 \zeta_1) \cos(2\alpha_2) \sin(2\alpha_1) + [(-t_{12} + a_1 \zeta_2) \cos(2\alpha_1) + (b_3 - \zeta_3) \sin(2\alpha_1)] \sin(2\alpha_2) &= 0
    \end{align}
\end{widetext}
can always be satisfied for some local unitary inserted between $U$ and $\rho^{SE}$ ($\rho^S \otimes \zeta^E$) on system $S$. We derived these equations by solving Eq.~\eqref{Eq:Reduced_Equality_Equivalence} exactly. The vectors $\vec{a} \defined \{ a_1, a_2, a_3 \}$, $\vec{b} \defined \{ b_1, b_2, b_3 \}$, and $\vec{\zeta} \defined \{ \zeta_1, \zeta_2, \zeta_3 \}$ are the local Bloch vectors for $V^S \rho^S V^{S^\dagger}$, $\rho^E$, and $\zeta^E$, respectively. The correlation matrix of $V^S \rho^{SE} V^{S^\dagger}$ is given by $\{t_{ij}\}_{i,j = 1}^3$. 

In this section we will review two cases: the one-parameter family and the two-parameter family of global unitaries. In the one-parameter family, we have that $\alpha_i \notin \mathcal{S}$ for only one of $i \in\{ 1, 2, 3 \}$. Let us assume, without loss of generality, that $\alpha_2 \in \mathcal{S}$ and $\alpha_3 \in \mathcal{S}$. Then we must satisfy the relations
\begin{align}
    (a_3 \zeta_1 - t_{31}) \sin(2\alpha_1) = (a_2 \zeta_1 - t_{21}) \sin(2\alpha_1) = 0
\end{align}
for all $\alpha$ and some choice of $\zeta_1$. We have the freedom to rotate the correlation matrix and the local Bloch vector $\vec{a}$. We can simply rotate the correlation matrix on the left side using Givens rotations to make $t_{21} = 0$ and $t_{31} = 0$, and then set $\sigma_1 = 0$ to satisfy these equalities for some local unitary acting on the system. This is always possible because we can apply whatever Givens rotations we want on the left of the correlation matrix by acting on the system with the appropriate local unitary. The same argument can also be made in the case when $\alpha_1 \in \mathcal{S}$ and $\alpha_2 \in \mathcal{S}$ or $\alpha_1 \in \mathcal{S}$ and $\alpha_3 \in \mathcal{S}$.

We now look at what occurs with the two-parameter family when only $\alpha_3 \in \mathcal{S}$ and $\zeta_3 = b_3$. By symmetry, we can come to the same conclusion for the other two-parameter families. Therefore, it suffices to show only what happens in this case. We obtain the conditions
\begin{align}
    (t_{32} - a_3 \zeta_2) \sin(2\alpha_2) &= 0, \\
    (-t_{31} + a_3 \zeta_1) \sin(2\alpha_1) &= 0, \\
    \notag\text{and } \; \; \; \; \; (t_{21} - a_2 \zeta_1) \cos(2\alpha_2) \sin(2\alpha_1)&\\ 
    + (-t_{12} + a_1 \zeta_2) \cos(2\alpha_1) \sin(2\alpha_2) &= 0,
\end{align}
in which we set $\zeta_1 = 0$ and $\zeta_2 = 0$. If we apply the Givens rotations
\begin{equation}
\left(
    \begin{array}{ccc}
        0 & 0 & 1 \\
        \cos(x) & -\sin(x) & 0 \\
        \sin(x) & \cos(x) & 0
    \end{array}
    \right) \text{ and }
\left(
    \begin{array}{ccc}
        \cos(y) & 0 & -\sin(y)  \\
        0 & 1 & 0 \\\
        \sin(y) & 0 & \cos(y),
    \end{array}
    \right)
\end{equation}
we can solve for $x$ and $y$ to always make $t_{32} = 0$ and $t_{31} = 0$. This covers the first and second equalities, so that we now only have to zero out the third component given by
\begin{align} \label{Eq:Final_Component}
    t_{21} \cos(2\alpha_2) \sin(2\alpha_1) -t_{12} \cos(2\alpha_1) \sin(2\alpha_2).
\end{align}
To do this, we apply a final Givens rotation given by
\begin{equation}
    \left(
    \begin{array}{ccc}
        \cos(z) & -\sin(z) & 0 \\
        \sin(z) & \cos(z) & 0 \\
        0 & 0 & 1
    \end{array}
    \right)
\end{equation}
and solve for the variable $z$ that makes the expression in Eq.~\eqref{Eq:Final_Component} equal to 0. We get the transformations $t_{21} \rightarrow t_{21} \cos(z) + t_{11} \sin(z)$ and $t_{12} \rightarrow t_{12}\cos(z) - t_{22} \sin(z) $ for the current correlation terms after the first two Givens rotations. We can always find some $z$-value that makes the value in Eq.~\eqref{Eq:Final_Component} zero while leaving the first two equalities the same.

\section{Proof of Theorem \ref{Thm:Diagonal_States}}
\label{Supp:Theorem_2}

We begin the proof by noting that the conditions in Eqs. (\ref{Eq:Condition_1})--(\ref{Eq:Condition_3}) can always be satisfied whenever one of the angles has the form $k \pi$ for $k \in \mathbb{W}$, while the other angles can be anything. For example, the solution for $\alpha_i = 0$ is given by $\zeta_j = 0$ when $i \neq j$ and by $\zeta_j = b_j$ when $i = j$. Now we assume that none of the angles are equal to $k \pi$ for the next part of the proof to complete generality. The conditions can now be made into a compact expression given by
    \begin{align}
        (S+M) \cdot \vec{\zeta} = S \cdot \vec{b}
    \end{align}
    for the matrices $S = \text{diag}(s_2 s_3, s_1 s_3, s_2 s_3)$ and
    \begin{align}
        M \defined \left(\begin{array}{ccc}
            0 & a_3 c_3 s_2 & -a_2 c_2 s_3 \\
            -a_3 c_3 s_1 & 0 & a_1 c_1 s_3 \\
            a_2 c_2 s_1 & -a_1 c_1 s_2 & 0
        \end{array}\right)
    \end{align}
    if we define $s_i \defined \text{sin}(2 \alpha_i)$ and $c_i \defined \text{cos}(2 \alpha_i)$. We can further reduce the expression to
    \begin{align}
        (\mathbb{I} + S^{-1} \cdot M) \cdot \vec{\zeta} = \vec{b}.
    \end{align}
    if we use the fact that no $\alpha_i = k \pi$ to ensure that $S$ is full rank. After simplifying, we calculate the magnitude for the solution to be
    \begin{align}
        \zeta^2 = \dfrac{b^2 + (\vec{b} \cdot \vec{a_t})^2}{1 + \vec{a_t} \cdot \vec{a_t}} = \dfrac{1+\text{cos}(\kappa)a_t^2}{1 + a_t^2} b^2
    \end{align}
    for $\vec{a_t} = \{ a_1 \text{cot}(2\alpha_1), a_2 \text{cot}(2\alpha_2), a_3 \text{cot}(2\alpha_3) \}$. We need this magnitude to always be less than or equal to 1 to ensure that a solution exists for $\vec{\zeta}$. We see that
    \begin{align}
        \dfrac{1+\text{cos}(\kappa)a_t^2}{1 + a_t^2} b^2 \Rightarrow b^2 \leq \dfrac{1 + a_t^2}{1+\text{cos}(\kappa)a_t^2} \in [1, 1 + a_t^2].
    \end{align}
    This will always be true for any choice of non-zero angles and set of local Bloch vectors.
% \end{proof}

\section{Proof of Theorem \ref{Thm:Non-Entangling_Unitary}}
\label{Supp:Theorem_3}

The condition in Eq.~\eqref{Eq:Non_Entangling_Condition} is the explicit form for the normalized entangling power \cite{Zanardi2000, Zanardi2001} of a global two-qubit unitary with the nonlocal portion given in Eq.~\eqref{Eq:Nonlocal_Part}. It is defined as the average entanglement generated by the global unitary $U$ when acting on the set of all two-qubit pure states. For all two-qubit unitaries, the entangling power is also equal to the disentangling power. Thus, when the entangling power vanishes, the unitary has no capacity to change the entanglement of $\rho^{SE}$. If the entanglement does not change, then the dynamics on the system can always be $U$-generated by a product state. One can easily see this because the only nonlocal two-qubit gates that are not entangling are locally equivalent to SWAP \cite{Makhlin2002}.

\section{Numerical Enforcement of Dynamics Matching}
\label{Sec:Numerical_Local_Operations}

We generated 1,000 two-qubit states and global unitaries that satisfy a number of conditions to increase our chances of obtaining cases that do not satisfy Eq.~(\ref{Eq:equality}). The constraints are given by
\begin{align}
    t_{ij} \rightarrow \text{sgn}[\text{cos}(2\alpha_i) \text{sin}(2\alpha_j)] |t_{ij}|,
\end{align}
with $|t_{ij}| \in[1/4, 1]$ and $a_i, b_i \in [-1/2, 1/2]$ with
\begin{align}
    b_k \rightarrow \dfrac{1}{2} \sum_{i,j} |\epsilon_{ijk}| \; \text{sgn}[\text{sin}(2\alpha_i) \text{sin}(2\alpha_j)] |b_k|
\end{align}
for the Levi--Cevita tensor $\epsilon_{ijk}$. From Eqs.~(\ref{Eq:Condition_1})--(\ref{Eq:Condition_2}) we see that these additional conditions will make it ``harder" for the left-hand sides to equal 0. Therefore, we increase the probability that we will run into instances of Eq.~(\ref{eq:initialCond0}) not holding for a valid $\zeta^E$. We only restrict $\alpha_i$ to be on the interval $ [0, 2\pi]$ and let $t_{ii} = 0$ for all $i$ since they do not appear in Eqs.~(\ref{Eq:Condition_1})--(\ref{Eq:Condition_2}). Since we would like to increase the correlation strength among other parameters $t_{ij}$ for when $i \neq j$, we must minimize $|t_{ii}|$ over each $i$. When the Bloch vector of $\zeta^E$ has a magnitude greater than 1 in order for Eq.~(\ref{Eq:equality}) to be true, we hold onto it for further processing. Otherwise, we throw it away since the condition already holds trivially. We retained 402 of the cases, which ranged in magnitudes, $|\vec{\zeta}|$, of slightly over 1 to more than 20 in some cases.

\begin{figure}
    \centering
    \includegraphics[scale = 0.40]{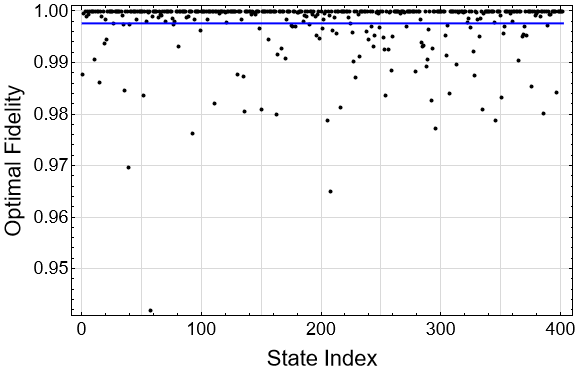}
    \caption{Plot of the optimal fidelity for each of the 402 experiments $(U_i, \rho^{SE}_i)$. In each case, a local unitary $V$ acts on the system such that the dynamics generated by $U_i$ can be reproduced by using a product state, and the resulting fidelity between $\rho^S$ and $V \rho^S V^\dagger$ is shown. The blue line indicates an average fidelity of roughly 99.8\%.}
    \label{Fig:Fidelity_vs_State}
\end{figure}

The goal is to determine whether there exists a case in which no local unitary acting on the system allows us to $U$-generate the system dynamics using a product state. Where such a unitary exists, we optimize the fidelity between the reduced system state before and after the unitary is applied. As shown in Fig.~\ref{Fig:Fidelity_vs_State}, we observe that not only is it always possible to satisfy the dynamics-matching condition in Eq.~(\ref{Eq:equality}) using a single local unitary, but in most cases this can be achieved with unit fidelity. Operationally, this corresponds to rotating the local system state about its own Bloch vector, thereby modifying only the nonlocal correlations between the system and environment while leaving the local state invariant.

These numerical observations motivate the following conjecture.
\begin{conjecture}
There exists a local unitary transformation in the preprocessing step that allows the system dynamics of any two-qubit state to be generated from an initial product state undergoing the global evolution $U$.
\end{conjecture}
\noindent This would imply that a simple change in basis on the local system is sufficient to recover the product-state prescription. The initially correlated states and the nonlocal parameters for the global evolutions, given in Eq.~(\ref{Eq:Nonlocal_Part}), can be found on GitHub \cite{github_dilley}. We also provide the Mathematica notebooks used to perform the numerical and symbolic calculations throughout this work.

\end{document}